\newcommand{\be}{\begin{equation}}
	\newcommand{\ee}{\end{equation}}
\newcommand{\ba}{\begin{eqnarray}}
	\newcommand{\ea}{\end{eqnarray}}
\newtheorem{theorem}{Theorem}
\begin{document}
	
\title{Concatenating quantum error-correcting codes with decoherence-free subspaces and vice versa}
\author{Nihar Ranjan Dash\textsuperscript{}}
   \email{dash.1@iitj.ac.in}
   \affiliation{Indian Institute of Technology Jodhpur, Rajasthan 342030, India}
\author{Sanjoy Dutta}
    \affiliation{Poornaprajna Institute of Scientific Research, Bidalur Post, Devanahalli, Bengaluru 562164, India}
\author{R. Srikanth\textsuperscript{}}
   \email{srik@ppisr.res.in}
   \affiliation{Poornaprajna Institute of Scientific Research, Bidalur Post, Devanahalli, Bengaluru 562164, India}
\author{Subhashish Banerjee}
   \email{subhashish@iitj.ac.in}
   \affiliation{Indian Institute of Technology Jodhpur, Rajasthan 342030, India}




\begin{abstract}
Quantum error-correcting codes (QECCs) and decoherence-free subspace (DFS) codes provide active and passive means, respectively, to address certain types of errors that arise during quantum computation. The latter technique is suitable to correct correlated errors with certain symmetries and the former to correct independent errors. The concatenation of a QECC and a DFS code results in a degenerate code that splits into actively and passively correcting parts, with the degeneracy impacting either part, leading to degenerate errors as well as degenerate stabilizer operators. The concatenation of the two types of code can aid universal fault-tolerant quantum computation when a mix of correlated and independent errors is encountered. In particular, we show that for sufficiently strongly correlated errors, the concatenation with the DFS as the inner code provides better entanglement fidelity, whereas for sufficiently independent errors, the concatenation with the QECC as the inner code is preferable. As illustrative examples, we examine in detail the concatenation of a two-qubit DFS code and a three-qubit repetition code or five-qubit Knill-Laflamme code, under independent and correlated errors.
\end{abstract}
\maketitle


\section{Introduction}\label{sec-intro}
The loss of quantum coherence, also known as the decoherence process, is the biggest obstacle to realizing practical quantum computation and communication \cite{banerjee2018open}. Errors also arise because of imperfections and faults in the quantum communication system or computation circuit. These errors, due to decoherence and device imperfections, reduce the fidelity and reliability of quantum operations \cite{shor1995scheme}. To overcome these obstacles, a variety of methods have been employed, among them quantum error-correcting codes (QECCs) \cite{steane1996error, calderbank1996good, knill1997theory,knill2000theory}, decoherence-free subspace (DFS) \cite{zanardi1997noiseless,lidar1998decoherence,bacon2000universal}, dynamic decoupling (DD) \cite{viola1999dynamical}, and error mitigation \cite{temme2017error,li2017efficient}. Quantum error-correcting codes offer an active method of intervention to protect quantum information from a wide variety of errors, including, most generally, independent errors, by correcting them when they occur. These codes can also be used to characterize quantum dynamics \cite{omkar2015characterization, omkar2015quantum}. Decoherence-free subspace is a passive scheme to combat correlated errors and preserves the coherence of the quantum state by encoding information in subspaces that are immune to certain types of noise by virtue of symmetries in the dynamics. An interesting example of DFS arises in the study of quantum memory for photons, where a collective reservoir interaction can happen for the cluster of atoms that make up the memory. This motivates the construction of a two-dimensional decoherence-free quantum memory protected from collective errors \cite{cerf2007quantum}. Note that this is an extreme case, and in the other extreme, the atoms may be subjected to independent errors.

The practice of concatenating block codes is extensively employed in the field of quantum information science and serves as a crucial element in nearly all fault-tolerant strategies \cite{gottesman2010introduction, poulin2006optimal}. The basic concatenated code, even in the classical context \cite{forney1966concatenated}, comprises an outer and an inner code. More generally, concatenation of codes is accomplished by iterative encoding of blocks of qubits in different levels. Embedding an inner code in an outer code decreases the effective error probability of the concatenated code, making its data qubits more reliable \cite{gaitan2008quantum}. The establishment of the accuracy threshold theorem \cite{aharonov1999fault, knill1998resilient,kitaev1997quantum} relies substantially on these concatenated codes. There can be multiple layers in the concatenated quantum code \cite{knill1996concatenated, rahn2002exact, grassl2009generalized}, where the physical space of one code behaves as the logical space for the next code. One can construct an efficient and robust error-correction scheme using this hierarchical structure.

While recursive concatenation of the same QECC or different QECCs has been extensively studied \cite{rahn2002exact,fern2008correctable,grassl2009generalized,huang2015generality}, recently a few authors have explored hybrid concatenation schemes, such as DD and QECC \cite{Khodjasteh2003quantum, Paz2013optimally}, DD and DFS \cite{zhang2004concatenating,west2012exchange}, and QECC and DFS \cite{lidar1999concatenating}. The nature of errors in the given quantum information processor, for example, whether the noise is ``bursty" or independent, will determine the codes to concatenate as well the order in which they may be concatenated \cite{chamberland2017error}. Such a concatenated setup, together with the availability of transversal gate operations \cite{Jochym-O'Connor2014using, chamberland2016thresholds}, enables universal fault-tolerant quantum computation. In particular, the combination of QECC and DFS can facilitate fault-tolerance \cite{boulant2005experimental} in the presence of correlated errors \cite{clemens2004quantum, cafaro2011concatenation, cafaro2011quantifying}.
The experimental implementation of the hybrid concatenation of active and passive methods for fighting errors \cite{boulant2005experimental} paves the way for practical quantum computation in the presence of coherent errors \cite{ouyang2021avoidng}.
A specific category of DFS-QECC hybrids has been developed to effectively address spontaneous emission errors and collective dephasing specifically intended to be compatible with the quantum optical and topological implementation of quantum dots in cavities and trapped ions \cite{alber2001stabilizing, khodjasteh2002universal}. The efficiency of concatenated DFS-QECC hybrids is contingent upon the use of a practical set of universal quantum gates that can maintain states exclusively within the DFS. It also depends on the fault-tolerant execution of DFS state preparation and decoding processes \cite{bacon2000universal}.

The present work focuses on the two-layered (and, in general, multi-layered)  protection based on the concatenation of QECC and DFS, in particular comparatively studying the concatenation of DFS with QECC layer and vice versa. The need to study such a hybrid concatenation scheme naturally arises when both correlated and independent errors occur simultaneously in the setup of a quantum computer. A basic situation of this sort is depicted in Fig. \ref{fig:grid-structure}. Here each row of qubits may represent, for example, ions in a linear ion trap, which are subject to correlated errors, for which a DFS is suitable. The full quantum information processor consists of an array of such rows such that the ions along the column are coupled with independent reservoirs and hence subject to independent errors. In such a case as this, a QECC may be suitable. For fault-tolerant computation, we would require a concatenation of QECC and DFS. The question then hinges on the order of concatenating the two: whether QECC with DFS or vice versa. It is the issue that this work studies, to determine situations where one scheme or the other may be more advantageous according to different criteria.

The remaining article is organized as follows. In Sec. \ref{sec-prelim} we present preliminaries on the DFSs, QECCs, and the construction of concatenated codes. In Sec. \ref{sec:2scheme} we study in detail the concatenation of two or more codes, in both independent and correlated error models. Specifically, we discuss two schemes of concatenating a QECC and DFS, one with the latter as the inner layer and the former as the outer layer (QD code) and vice versa (DQ code). Specific examples of  concatenated codes, six-qubit codes concatenating a $[[3,1]]$ repetition QECC and $[[2,1]]$ DFS, and ten-qubit codes concatenating a $[[5,1]]$ Knill-Laflamme QECC and $[[2,1]]$ DFS, are studied in Secs. \ref{sec-conccodes6} and \ref{sec-conccodes10}, respectively. Passivity and degeneracy of the DFS are reflected as the passive part of the concatenated code and the degeneracy in the passive and active parts of the concatenated code, respectively. In particular, the degeneracy has a twofold manifestation: (a) as degenerate equivalence classes of correctable errors (Secs. \ref{sec-conccodes6} and \ref{sec-conccodes10}, covering the $[[6,1]]$ and $[[10,1]]$ codes) and (b) as the corresponding stabilizers (Sec. \ref{sec-stabilizergens}) of the concatenated code. We summarize and discuss our results in Sec. \ref{sec-conclusions}.

\begin{figure}[htp]
    \centering
    \includegraphics[width=8cm]{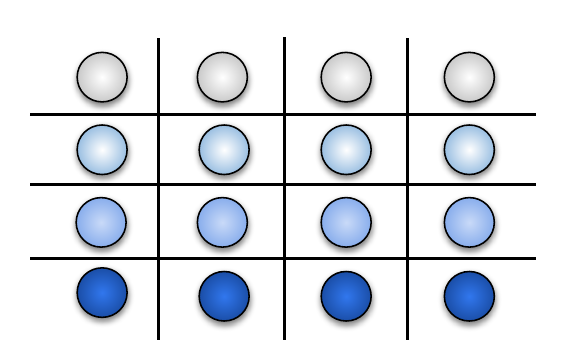}
    \caption{Independent-correlated hybrid error model. Correlated and independent noise are shown in an array of qubits constituting a quantum computer such as an optical lattice or trapped ions in a cavity QED setup. The qubits along a row are subject to correlated noise, described by a dependent error model. Qubits along a column are subject to an independent-error model.}
    \label{fig:grid-structure}
\end{figure}

\section{Preliminaries}\label{sec-prelim}
\subsection{Decoherence-free subspace}
Decoherence-free subspaces are the subspaces that act as quiet corners in the total Hilbert space shielded from errors by a certain symmetry in the system's interaction with the environment \cite{lidar2001decoherenceI,lidar2001decoherenceII,pathak2013elements}. By encoding quantum information in these subspaces, the coherence of the quantum system can be preserved for an extended period. The concatenation of DFS and QECC gives an additional layer of protection from hybrid errors, which have elements of symmetry and independence.

The Liouville equation is given by
\begin{align}
\mathcal{L}[\rho]= \Dot{\rho},
\label{eq:liouvilleeq}
\end{align}
where $\mathcal{L}$ is the Liouvillian.
A DFS constitutes the degenerate subspace obtained as a stationary solution to Eq. (\ref{eq:liouvilleeq}). States in this subspace are called decoherence-free (DF) states.

For simplicity, we consider a two-qubit system affected by the collective bit-flip error. The error group is \{$II$, $XX$\}. Since this group is Abelian, all irreducible representations belonging to it are one dimensional \cite{lidar2001decoherenceI}. It is a result in group theory that the number of irreducible representations within a group is equivalent to the number of classes associated with that group \cite{cornwell1997group}. 

As there are two classes in this group, $II$ and $XX$, the number of irreducible representations of this group is equal to 2, namely, $\Gamma^{+} \equiv \{1,1\}$ and $\Gamma^{-}\equiv \{1, -1\}$. The character is obtained by calculating the trace of the irreducible representation. Since the square of the two elements in the group is an identity, the character of the irreducible representations can only take $\pm 1$. The DFS can be constructed with the action of a projection operator belonging to the one-dimensional irrep on the initial states. Here there are two irreducible representations, which correspond to two DFSs, given by
\begin{align}
\Gamma^+ &\equiv \textrm{span}\{\ket{00}+\ket{11}, \ket{01}+\ket{10}\},\nonumber\\
 \Gamma^- &\equiv \textrm{span}\{\ket{00}-\ket{11} ,\ket{01}-\ket{10}\}.
 \label{eq:irrep1and2}
\end{align}
The two corresponding operators which project to the given DFS are
\begin{align}
P_{\pm} = \frac{1}{2} (II \pm XX).
\label{eq:projectorsoneandtwo}
\end{align}
For example, we can define our logical DF code states for $\Gamma^+$ as (apart from a normalization factor)
\begin{align}
\ket{0}_D &= \ket{00}+\ket{11},\nonumber\\
\ket{1}_D &= \ket{01}+\ket{10}.
\label{eq:dfslogic}
\end{align}
The DFS spanned by these states is denoted by $\mathcal{H}_\textsubscript{DFS}$, and it is orthogonal and complementary to the subspace $\mathcal{H}_\textsubscript{DFS}^\perp$, spanned by the DFS states defined by $\Gamma^-$, namely,
\begin{align}
\ket{2}_D &= \ket{00}-\ket{11},\nonumber\\\ket{3}_D &= \ket{01}-\ket{10}.
\label{eq:orthogonaldfslogic}
\end{align}
The two DFSs are
\begin{align}
\mathcal{H}_\textsubscript{DFS} \equiv \textrm{span}\{\ket{0}_D, \ket{1}_D\},\nonumber\\ \mathcal{H}_\textsubscript{DFS}^\perp \equiv \textrm{span}\{\ket{2}_D, \ket{3}_D\}.
\label{eq:dfsandorthogonaldfshilbertspace}
\end{align}
Each of these subspaces in Eq. \eqref{eq:dfsandorthogonaldfshilbertspace} serves as a bona fide DFS.
The superposition of DF states from the same irreducible representation is also a DF state, that is,
\begin{align}
\ket{\psi}_D &= \alpha\ket{0}_D+\beta\ket{1}_D,
\label{eq:psilogic}
\end{align}
with $|\alpha|^2 + |\beta|^2=1$, is a DF state. However, one can not take states from different irreducible representations to make a DF state.

It may be noted that the projector of an irreducible representation acts as an identity operator for the DF states corresponding to that irreducible representation, while those DF states are annihilated by the projectors corresponding to a different irreducible representation.

\subsection{Quantum Error-Correcting Codes}
The QECCs constitute a major approach for combating errors \cite{lidar2013quantum}. In the quantum error-correction scheme, quantum information is encoded in a larger Hilbert space by adding redundancy, such that errors only affect the redundancy. The error-correction process involves the encoding, detection, and recovery process. Generally, a QECC is a set of orthogonal states or quantum code words represented by the $[[n,k,d]]$ code, where $k$ logical qubits are encoded into $n$ physical qubits with the code distance $d$.

Considering the code words $\ket{\psi_{i}}$ for the code $\mathcal{C}$, the necessary and sufficient conditions for the quantum error-correcting codes are \cite{knill1997theory,bennett1996mixed}
\begin{align}
\bra{\psi_{i}}E_{m}^{\dag}E_{n}\ket{\psi_{j}}&=0 ~~(i \ne j), \nonumber\\
\bra{\psi_{i}}E_{m}^{\dag}E_{n}\ket{\psi_{i}}&= \bra{\psi_{j}}E_{m}^{\dag}E_{n}\ket{\psi_{j}}
\label{errorcorrectingcriterion2}
    \end{align}
for $\bra{\psi_{i}}\ket{\psi_{j}}=0$. Here $E_{m}$ and $E_{n}$ are the error operator elements from the correctable errors. Instead of the code words, the QECC can be represented by the minimal representation with the help of stabilizer generators. 

The set of stabilizer generators $S_{i}$ is a minimal set that can generate all the elements in the stabilizer $\mathcal{S}$. The action of the stabilizer generators $S_{i}$ on the code words is written as
\begin{align}
    S_{i}\ket{\psi_{j}}=\ket{\psi_{j}}.
\end{align}
The normalizer $\mathcal{N}$ of the stabilizer $\mathcal{S}$ is a subset of the set of Pauli operators $\mathcal{P}$ such that the stabilizer is closed under conjugation by the normalizer elements,
\begin{align}
    NSN^{\dag}\in \mathcal{S},
    \label{eq:normalizer}
\end{align}
with $N\in\mathcal{N}\subset \mathcal{P}$. 
The centralizer $\mathbb{C}$ of the stabilizer $\mathcal{S}$ is the set of Pauli operators $\mathcal{P}$ that commute with stabilizer elements
\begin{align}
    CS&=SC,
\end{align}
i.e., $\mathbb{C}=\{C \colon C\in \mathcal{P}, \forall S \in \mathcal{S}, [S,C]=0\}.$
Owing to 
the fact that $-I \not\in \mathcal{S}$, the centralizer and normalizer of the stabilizer are identical, $\mathbb{C} = \mathcal{N}$. The physical significance of the normalizer is that $\mathcal{N} - \mathcal{S}$ corresponds to the set of (nontrivial) logical operators of the code. Detectable errors $E$ are characterized by the condition $E \not\in \mathcal{N} - \mathcal{S}$. A set of errors $\{E_m\}$ is correctable if and only if $E_m^{\dagger} E_n \not\in \mathcal{N} - \mathcal{S}$.

\subsection{Construction of Concatenated Codes}\label{sec:CCC}
For a simple exposition of the concatenated quantum error-correcting code construction, we restrict the discussion at first to two levels of concatenation. Consider two quantum codes (two QECCs or two DFSs or one of each) for overcoming errors: $\mathcal{C}_{o}$, an outer $[[n_o,k_o]]$ code, which encodes $k_o$ logical qubits into $n_o$ physical qubits, and $\mathcal{C}_{i}$, an inner $[[n_i,k_i]]$ code, which encodes $k_i$ logical qubits into $n_i$ physical qubits. There are two procedures for the construction of the concatenated quantum codes \cite{gaitan2008quantum}, depending on whether $n_o$ is divisible by $k_i$ or not. 

The following procedure is suitable when $n_o$ is divisible by $k_i$.
\begin{enumerate}
    \item Encode $k_o$ logical qubits into $n_o$ physical qubits.
    \item Since $n_o$ is divisible by $k_i$, $n_o$ qubits will be partitioned into $\frac{n_o}{k_i}$ blocks with $n_i$ qubits in each block.
    \item This procedure results in a concatenated code given by 
\begin{equation}
[[n_\textsubscript{CC},k_\textsubscript{CC}]] = \biggr[\biggr[\frac{n_on_i}{k_i}, k_o \biggr]\biggr].
\label{eq:proc1}
\end{equation}
\end{enumerate}
In the case that $n_o$ is not divisible by $k_i$ the following procedure is employed:
\begin{enumerate}
    \item Input a quantum string of length $k_ik_o$ qubits. Encode each block of $k_o$ qubits into
$n_o$ qubits using code $\mathcal{C}_{o}$. This results in a string of length $n_o k_i$, i.e., $n_o$ blocks with $k_i$ qubits in each block.
\item Encode $k_i$ qubits in each block into $n_i$ qubits, resulting in $n_on_i$ qubits.
\item  This procedure results in a
\begin{equation}
[[n_\textsubscript{CC},k_\textsubscript{CC}]] = [[n_on_i, k_ok_i]].
\label{eq:proc2}
\end{equation}
\end{enumerate}
In the present work with a hybrid scenario, the outer code can be a QECC and the inner code a DFS or vice versa.


\section{Two schemes for hybrid two-level concatenation }\label{sec:2scheme}
In this work, we consider two-level code concatenation, where one layer employs a QECC and the other a DFS. The encoding that is first applied (last to be decoded) is the outer layer and the encoding that is applied last (decoded first) is the inner layer. We consider two orderings of concatenation: QECC being the outer code and DFS the inner one (QD code) and, conversely, DFS being the outer code and QECC the inner one (DQ code).

Throughout the paper, we will denote a $[[n, k]]$ code by $[[n, k]]_Q$ if it embeds $k$ logical qubits in $n$ physical qubits of a QECC code word and by $[[n, k]]_D$ if it embeds $k$ logical qubits in $n$ physical qubits of a DFS.

It is natural to consider correlated errors when DFS is used. We now consider a restricted correlated noise model, namely, a hybrid independent-correlated error model, where correlation across qubits is allowed within the blocks of the innermost layer of concatenation, but there are no cross-block correlations \cite{rahn2002exact}. This is more general than the independent-error model, but more restricted than the most general correlated noise.

Let the probability of the single qubit without error and with errors be given by
\begin{align}
    p_0&= 1-p,\nonumber\\
     p_1&= p.
     \label{eq:singlequbitprob}
\end{align}
Within each block, the conditional probabilities are expressed as
\begin{align}
    p_{(i|j)}=(1-\mu)p_{i}+\mu \delta_{i,j},
    \label{eq:conditionalprobgeneralformula}
\end{align}
where $i,j={0,1}$ and $\mu$ is the strength of the correlation.
Equation (\ref{eq:conditionalprobgeneralformula}) implies that
\begin{subequations}
\begin{align}
    p_{(0|0)} &= (1 - \mu) (1 - p) + \mu, \label{eq:19a}\\
    p_{(0|1)} &= (1 - \mu) (1 - p), \label{eq:19b}\\
    p_{(1|0)} &= (1 - \mu) p, \label{eq:19c}\\
    p_{(1|1)} &= (1 - \mu) p + \mu.
    \end{align}
\label{eq:conditionalprobabilitites}
    \end{subequations}
The following result gives the recursion relation that determines the failure probability $p_F^{\rm CC}$ of the concatenated code in this model.

\begin{theorem}
Let $\mathcal{C}_{0}, \mathcal{C}_{1}, \mathcal{C}_2, \ldots, \mathcal{C}_t$ denote a sequence of $(t+1)$ codes that are concatenated, with 0 labeling the outermost and $t$ the innermost. The errors are assumed to be subject to the independent-correlated error model characterized by the parameters $p$ and $\mu$.
Denoting the stand-alone failure probability of the code labeled $\mathcal{C}_j$ by $p_F^{j}(\mu,p)$, the failure probability for the concatenated code is 
\begin{equation}
  p_F^{\rm CC}(\mu,p) = p_F^{(0)}\left[0,p_F^{(1)}\left[0,\ldots[p_F^{(t-1)}[0, p_F^{(t)}(\mu,p)]]{\ldots}\right]\right].
  \label{eq:ccerror}
\end{equation}
\label{thm:ccerror}
\end{theorem}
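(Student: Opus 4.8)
The plan is to establish Eq.~(\ref{eq:ccerror}) by induction on the concatenation layers, working from the innermost code $\mathcal{C}_t$ outward, and relying on the single structural hypothesis of the independent-correlated error model: correlations live only within the innermost blocks, with no cross-block correlations. For $j\in\{0,1,\ldots,t\}$, I would let $P_j$ denote the stand-alone failure probability of the concatenated code built from the codes $\mathcal{C}_j,\mathcal{C}_{j+1},\ldots,\mathcal{C}_t$, with $\mathcal{C}_j$ as its outermost and $\mathcal{C}_t$ as its innermost layer, so that the target quantity is $p_F^{\rm CC}(\mu,p)=P_0$. I would in fact prove the stronger claim that $P_j=q_j$, where $q_t=p_F^{(t)}(\mu,p)$ and $q_j=p_F^{(j)}[0,q_{j+1}]$ for $j<t$. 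The base case is immediate: the innermost code acts directly on the physical qubits carrying the correlated noise, so by the definition of $p_F^{(t)}$ as the stand-alone failure probability under the model with parameters $(\mu,p)$ we have $P_t=p_F^{(t)}(\mu,p)=q_t$.

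For the inductive step I would assume $P_{j+1}=q_{j+1}$ and use the decoding order (innermost decoded first) together with the no-cross-block-correlation hypothesis. By the construction of concatenated codes in Sec.~\ref{sec:CCC}, the physical qubits of layer $\mathcal{C}_j$ are precisely the logical qubits emitted by the sub-code $\mathcal{C}_{j+1},\ldots,\mathcal{C}_t$ beneath it, and each such logical qubit carries a residual logical error, after the lower layers are decoded, with probability $P_{j+1}$. The crucial observation is that these residual errors are mutually independent across the distinct inner blocks: since the only correlations in the physical noise live inside the innermost blocks, and the failure event of a given inner sub-block is a function of that block's physical errors alone, the failures of distinct sub-blocks are statistically independent. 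Hence the error model seen by $\mathcal{C}_j$ is an independent, correlation-free ($\mu=0$) model with effective per-qubit error probability $P_{j+1}$, and by definition of the stand-alone failure probability $P_j=p_F^{(j)}[0,P_{j+1}]=p_F^{(j)}[0,q_{j+1}]=q_j$.

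Unrolling the recursion by applying the inductive step from $j=t-1$ down to $j=0$ and substituting repeatedly then yields $P_0=p_F^{(0)}[0,p_F^{(1)}[0,\ldots,p_F^{(t-1)}[0,p_F^{(t)}(\mu,p)]\ldots]]$, which is exactly Eq.~(\ref{eq:ccerror}) once we recall $p_F^{\rm CC}(\mu,p)=P_0$. Note that only the innermost argument retains the correlation strength $\mu$, while every outer layer is fed the correlation parameter $0$, reflecting the confinement of correlations to the innermost blocks.

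The step I expect to require the most care is the independence claim inside the inductive step, namely that independence of the physical errors across innermost blocks propagates to independence of the residual logical errors across the blocks of every outer layer, so that the correlation parameter is genuinely $0$ at all layers above the innermost. This needs the concatenation geometry to route the outputs of distinct inner blocks into the outer-layer blocks without reintroducing shared randomness, which is cleanest when each inner block emits a single logical qubit (as in the $[[2,1]]_D$, $[[3,1]]_Q$, and $[[5,1]]_Q$ examples, where $k_i=1$), so that each physical qubit of an outer layer depends on a disjoint set of physical qubits. I would also fix, at the outset, the definition of the stand-alone failure probability $p_F^{(j)}(\cdot,\cdot)$ as the probability of an uncorrectable logical error after ideal decoding, so that substituting one code's failure probability as the next code's effective single-qubit error probability is precisely the standard concatenation composition and the base case matches the model directly.
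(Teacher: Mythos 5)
Your proof is correct and follows essentially the same route as the paper's: the paper's own argument is exactly this recursion, propagating each layer's block failure probability outward as the effective per-qubit error rate of the next layer with correlation parameter $\mu_j = 0$, justified by the absence of cross-block correlations. You merely formalize the recursion as an explicit induction and spell out the independence of residual logical errors across blocks (with the apt caveat about $k_i=1$ geometry), which the paper asserts more tersely.
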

\begin{proof}
For $1 \le j \le t-1$, given the block failure probability $p_F^{j}$ of each block at level $j$, this error is propagated to the next outer level $(j-1)$ as the block failure probability given by $p_F^{(j-1)}[\mu_{j}, p_F^{(j)}] \equiv p_F^{(j-1)}[0, p_F^{(j)}]$, the equality following from the assumption of the absence of cross-block correlation. Proceeding thus recursively, we reach the $(t-1)$th layer for which the next inner layer is the physical qubits: Thus $p_F^{(t-1)} = p_F^{(t-1)}[0, p_F^{(t)}] \equiv p_F^{t-1}[0,p_F^{(t)}(\mu,p)]$. It follows that the error in the $t$-layer concatenated code is given by $p_F^{\rm CC}(p)$ in Eq. (\ref{eq:ccerror}).
\end{proof} 

As an illustration of Theorem \ref{thm:ccerror}, consider the case of concatenating two codes $\mathcal{C}_{o}$ and $\mathcal{C}_{i}$ denoting the outer and inner codes, with failure probabilities $p_F^{(o)}(\mu,p)$ and $p_F^{(i)}(\mu,p)$. Then the failure probability of the concatenated scheme in the hybrid independent-correlated model is
\begin{equation}
  p_F^{\rm CC}(\mu,p) = p_F^{(o)}[0,p_F^{(i)}(\mu,p)].
  \label{eq:ccerroroutin}
\end{equation}
If cross-block correlations are allowed, then mathematically the simplest scenario is one where each level has an independent correlation parameter $\mu_j$ ($j \ne t$), corresponding to block-block correlation, related to possible correlations that arise during decoding. If instead we generalize by simply removing the prohibition on cross-block correlations, then the dependence of the correlation parameter at the $j$th level on the basic parameters $\mu$ and $p$ can in general be involved and not lead to a simple recursion formula in the manner of Eq. (\ref{eq:ccerror}).

To see this, consider a $[[6,1]]$ code obtained with a $[[3,1]]$ bit-flip QECC as the outer layer and the above $[[2,1]]$ DFS as the inner layer. For this QD configuration, the blocks are $(1a,1b)$, $(2a,2b)$, and $(3a,3b)$. We find the cross-block correlation to be
\begin{widetext}
\begin{align}
   p_{(\underline{0}_{1a1b}|\underline{0}_{2a2b})} &= p_{(II \vee XX | II \vee XX)} = \frac{p_{(II_{1a1b} \vee XX_{1a1b} \wedge II_{2a2b} \vee XX_{2a2b})}}{p_{(II_{2a2b} \vee XX_{2a2b})}}
   \nonumber \\
      &=\frac{1 - (1 - p) p (1 - \mu) \bigg[4 - 4 p (-1 + \mu)^2 + 
    4 p^2 (-1 + \mu)^2 + (-1 + \mu) \mu\bigg]}{[(1 - \mu)(1 - p) + \mu] (1 - p) + [(1 - \mu)p + \mu] p},
   \label{eq: correlatederrorterms}
\end{align}
\end{widetext}
which clearly does not conform to a simple recursive formula along the lines of Eq. (\ref{eq:19a}). For the remaining paper, we restrict the discussion to the hybrid model, where we assume vanishing cross-block correlations.

We quantify the performance of the concatenated code using entanglement fidelity \cite{schumacher1996sending, nielsen1996entanglement}, which in the present context can be expressed by the formula \cite{gaitan2008quantum, cafaro2010quantum}
\begin{equation}
  F_e = 1-p_F^{\rm CC}(\mu,p),
  \label{eq:EF}
\end{equation}
where $p$ is the error probability on an individual qubit. In the following two Sections, we study examples of concatenated codes and their performance under independent and independent-correlated noise.

We recollect that in a standalone code, errors $E_{a}$ and $E_b$ are degenerate if $E_{a}E_{b} \in \mathcal{S}$, where $\mathcal{S}$ is the stabilizer. Equivalently,  $\forall_{\ket{w} \in \mathcal{C}} E_{a}\ket{w} = E_{b}\ket{w}$. The DFS code has a special kind of degeneracy whereby $\forall_a E_a \in \mathcal{S}$, thereby making the errors $E_a$ passively correctable. Note that the passively correctable errors may equivalently be referred to as stabilizer errors \cite{lidar2001decoherenceII}. This element of degeneracy with passivity is inherited by the concatenation that it forms a part of. The structure of this inherited feature is discussed in the following result.

\begin{theorem}
The concatenation of a QECC and DFS code results in a degenerate code that splits into actively and passively correcting parts, with the degeneracy impacting either part, and leading to degenerate errors as well as degenerate stabilizers.
\label{thm:splittwoparts}
\end{theorem}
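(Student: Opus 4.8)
The statement is structural, so my plan is to prove it by explicitly decomposing the stabilizer group of the concatenation and then tracking separately how the DFS symmetry and the QECC syndrome structure each contribute. First I would write down the stabilizer group $\mathcal{S}_{\rm CC}$ of the two-level code using the construction of Sec.~\ref{sec:CCC}: with the inner code applied blockwise and the outer code applied across the logical qubits of the blocks, $\mathcal{S}_{\rm CC}$ is generated by two families, namely (i) the inner-code stabilizer generators acting within each block, and (ii) the outer-code stabilizer generators lifted by replacing each outer physical-qubit operator with the corresponding logical operator of the inner code. I would check that this $\mathcal{S}_{\rm CC}$ is a bona fide stabilizer (Abelian, with $-I\notin\mathcal{S}_{\rm CC}$), so that $\mathbb{C}=\mathcal{N}$ persists and the correctability criterion $E_m^{\dagger}E_n\notin\mathcal{N}-\mathcal{S}$ recalled in the preliminaries remains applicable.

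Next I would establish the active/passive split. Using the property noted just before the theorem — that the DFS carries the special degeneracy $\forall_a\,E_a\in\mathcal{S}$, i.e.\ its collective-error symmetry operators (here $XX$, associated with the projectors $P_{\pm}$) already lie in its own stabilizer — I would argue that whichever layer is the DFS contributes generators that act as the identity on the code space and therefore require no syndrome extraction; these constitute the passively correcting part. The QECC layer contributes generators whose eigenvalues must be actively measured and form the actively correcting part. Since the two families commute and together generate $\mathcal{S}_{\rm CC}$, the code splits, and the split is independent of the ordering (QD versus DQ): only which physical qubits carry the passive generators changes.

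To exhibit the degeneracy I would use the standalone criterion recalled in the text, that $E_a,E_b$ are degenerate iff $E_aE_b\in\mathcal{S}_{\rm CC}$, equivalently $E_a\ket{w}=E_b\ket{w}$ for all codewords. For the passive sector this is immediate: every collective error lies in $\mathcal{S}_{\rm CC}$, and since $\mathcal{S}_{\rm CC}$ is a group, any two such errors have product in $\mathcal{S}_{\rm CC}$ and are hence degenerate, so degeneracy touches the passive part. For the active sector I would show the degeneracy is inherited: given a correctable QECC error $E$ and any inner DFS stabilizer element $s\in\mathcal{S}_{\rm CC}$, the distinct Pauli $E'=Es$ satisfies $E'\ket{w}=Es\ket{w}=E\ket{w}$ for all $\ket{w}\in\mathcal{C}$, because $s$ stabilizes $\mathcal{C}$. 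Thus the correctable errors fall into nontrivial degenerate equivalence classes in both sectors, which is manifestation~(a).

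Finally, for manifestation (b) I would read off the stabilizer witnesses: each degenerate pair $(E_a,E_b)$ contributes the nontrivial element $E_aE_b\in\mathcal{S}_{\rm CC}$, and because distinct physical Paulis can realize identical action on $\mathcal{C}$, the generating set admits degenerate stabilizer operators, i.e.\ several syndrome-equivalent representatives of the same stabilizer constraint. I expect the main obstacle to be making ``degenerate stabilizer'' fully precise and proving that the split is \emph{exhaustive} — that every element of $\mathcal{S}_{\rm CC}$ and every correctable-error class is assigned to exactly one of the active or passive sectors, and that the degeneracy genuinely reaches both. The cleanest route is to defer the exhaustiveness and the concrete witnesses to the explicit $[[6,1]]$ and $[[10,1]]$ computations of Secs.~\ref{sec-conccodes6}, \ref{sec-conccodes10}, and \ref{sec-stabilizergens}, treating those as constructive confirmations of the general structural claim established here.
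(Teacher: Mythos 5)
Your proposal follows essentially the same route as the paper's proof: you build $\mathcal{S}_{\rm CC}$ from blockwise inner-code generators plus outer-code generators lifted through the inner code's logical operators, identify the DFS-derived family as the passive sector (its generators coinciding with the correctable errors, hence requiring no syndrome measurement), and trace the degeneracy of errors and of stabilizers to the group structure and to the multiplicity of inner logical realizations, exactly as the paper does. Your added sanity checks ($\mathcal{S}_{\rm CC}$ Abelian with $-I \notin \mathcal{S}_{\rm CC}$, the explicit $E' = Es$ degeneracy lemma) and your deferral of exhaustiveness to the $[[6,1]]$ and $[[10,1]]$ computations are consistent with the paper, whose own proof is equally structural and likewise relies on those examples for the concrete witnesses.
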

\begin{proof}
We denote the stabilizer groups for QECC and DFS by $\langle Q_{1}, Q_{2}, Q_{3} \ldots Q_q\rangle$ and $\langle D_{1}, D_{2}, D_{3} \ldots D_d\rangle$, respectively. 

In the QD concatenation, irrespective of whether $k_i \mid n_o$ or $k_i \nmid n_o$, a set of stabilizers $\langle D_{1}, D_{2}, D_{3} \ldots D_d\rangle$ is assigned to each ($n_i$-sized) block of the concatenated code. By design, these will constitute a set of passive, degenerate stabilizers, which will be identical to passively corrected errors. If there are $n$ blocks in the inner layer, then the number of degenerate passive stabilizer generators will be $n \times d$. 

The other part of the stabilizer can be formed by encoding the elements of $\langle Q_{1}, Q_{2}, Q_{3} \ldots Q_q\rangle$ in terms of the logical operations of the DFS code. Here the active nature of the QECC is inherited, in that the active stabilizers $Q_1, Q_2, \ldots$ are applied when the encoded quantum information has been decoded at the inner layer and brought to the outer layer. This leads to $q$ stabilizers modulo degeneracy. The degeneracy in these active stabilizers arises because of possible multiplicity in the logical operations of the DFS code.

In the DQ concatenation, irrespective of whether $k_i \mid n_o$ or $k_i \nmid n_o$, a set of stabilizers $\langle Q_{1}, Q_{2}, Q_{3} \ldots Q_q\rangle$ is assigned to each ($n_i$-sized) block of the concatenated code. By design, these will constitute a set of active stabilizers. The number of (non-degenerate) active stabilizer generators will be $n \times q$. The other part of the stabilizer can be formed by encoding the elements of $\langle D_{1}, D_{2}, D_{3} \ldots D_d \rangle$ in terms of the logical operations of the QECC. This leads to $d$ degenerate, passive stabilizers.
\end{proof}
The following three sections illustrate Theorem \ref{thm:splittwoparts} in the context of the two-level concatenation of a $[[2,1]]$ DFS code (\ref{eq:dfslogic}) and a $[[3,1]]$ bit-flip QECC or a $[[5,1]]$ QECC.


\section{The [[6,1]] QD and DQ codes \label{sec-conccodes6}}

Consider the concatenation of the $[[3,1]]_Q$ bit-flip code and the $[[2,1]]_D$ DFS code (\ref{eq:dfslogic}). In previous works, either the QD or DQ schemes were considered individually \cite{clemens2004quantum, cafaro2011quantifying, cafaro2011concatenation} but not comparatively, as we do here.

\begin{figure}[htp]
    \centering
     \begin{subfigure}[b]{0.225\textwidth}
         \centering
         \includegraphics[width=\textwidth]{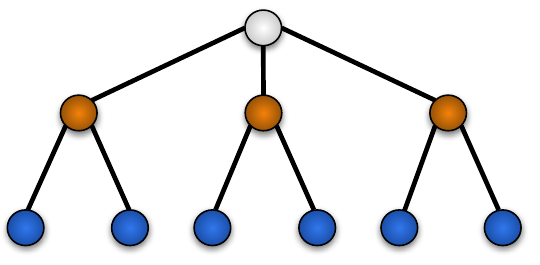}
         \caption{}
         \label{fig:61QD}
     \end{subfigure}
     \hfill
    \begin{subfigure}[b]{0.225\textwidth}
         \centering
         \includegraphics[width=\textwidth]{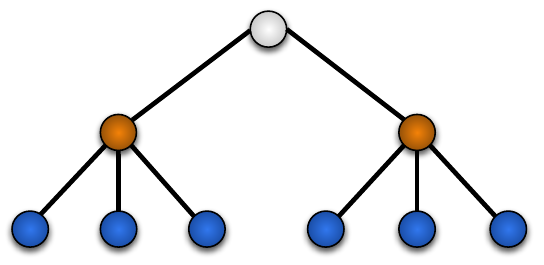}
         \caption{}
         \label{fig:61DQ}
     \end{subfigure}
    \caption{ The [[6,1]] block structure: (a) QD configuration and (b) DQ configuration. The dark orange circles represent the outer code blocks, while the blue circles represent the inner code qubits.}
    \label{fig-block-6-qdq}
    \end{figure}
First we consider the QD scheme, where the former is the outer and the latter the inner code [Fig. \ref{fig:61QD}]. By Eq. (\ref{eq:proc1}), this results in the concatenated $ [[6,1]]_{QD}$ code, for which the logical codewords are
\begin{align}
\ket{0} &\longrightarrow  \ket{0}_D \ket{0}_D \ket{0}_D
= \frac{1}{2\sqrt{2}}(\ket{00}+\ket{11})^{\otimes3}, \nonumber \\
\ket{1} &\longrightarrow  \ket{1}_D \ket{1}_D \ket{1}_D
= \frac{1}{2\sqrt{2}}(\ket{01}+\ket{10})^{\otimes3}.
\label{eq:QD[6,1]}
\end{align}
The circuits for encoding in the $[[6,1]]$ QD or DQ states (Fig. \ref{fig-block-6-qdq}) are given in Figs. \ref{fig:circuit61QD} and \ref{fig:circuit61DQ}, respectively.
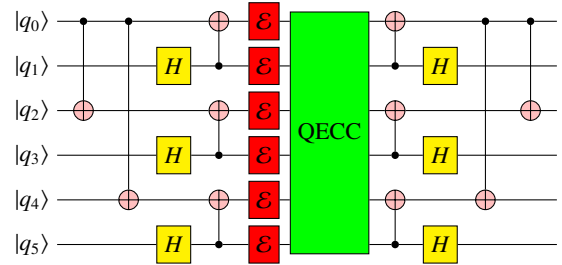
\begin{figure} 
\centering
\begin{tikzpicture}
\begin{yquant}
qubit {$\ket{\reg_{\idx}}$} q[6];
[fill=pink]
cnot q[2] | q[0];
[fill=pink]
cnot q[4] | q[0];
[fill=yellow]
h q[1,3,5];
[fill=pink]
cnot q[0] | q[1];
[fill=pink]
cnot q[2] | q[3];
[fill=pink]
cnot q[4] | q[5];
[fill=red]
box {$\mathcal{E}$} q;
[fill=green]
box {QECC} (q);
[fill=pink]
cnot q[4] | q[5];
[fill=pink]
cnot q[2] | q[3];
[fill=pink]
cnot q[0] | q[1];
[fill=yellow]
h q[5,3,1];
[fill=pink]
cnot q[4] | q[0];
[fill=pink]
cnot q[2] | q[0];
\end{yquant}
\end{tikzpicture}
\caption{ Circuit to implement the $[[6,1]]$ QD code. Here qubit $q_0$ stores the encoded information while the remaining (ancillary) qubits $q_1$-$q_5$ start in the state $\ket0$. The circuit shows the sequential application of the encoding operation, possible bit-flip errors on the physical qubits, the error correction step, and finally a decoding operation, which reverses the encoding to restore the quantum information at qubit $q_0$.
}
\label{fig:circuit61QD}
\end{figure}

In the DQ scheme, we concatenate in the reverse order: the $[[3,1]]_Q$ bit flip code as the inner code with the $[[2,1]]_D$ DFS code as the outer code [Fig. \ref{fig:61DQ}]. By Eq. (\ref{eq:proc1}), this results in the concatenated $[[6,1]]_{DQ}$ code, with the logical code words
\begin{align}
\ket{0} &\longrightarrow \ket{00}_Q+\ket{11}_Q
= \frac{1}{\sqrt{2}}\ket{000}^{\otimes2}+\ket{111}^{\otimes2}, \nonumber\\
\ket{1}&\longrightarrow \ket{01}_Q+\ket{10}_Q =\frac{1}{\sqrt{2}}\ket{000111}+\ket{111000}.
\label{eq:onelogicsixonedq}
\end{align}
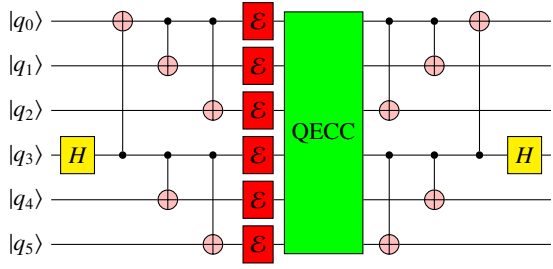
\begin{figure} 
\centering
\begin{tikzpicture}
\begin{yquant}
qubit {$\ket{\reg_{\idx}}$} q[6];
[fill=yellow]
h q[3];
[fill=pink]
cnot q[0] | q[3];
[fill=pink]
cnot q[1] | q[0];
[fill=pink]
cnot q[2] | q[0];
[fill=pink]
cnot q[4] | q[3];
[fill=pink]
cnot q[5] | q[3];
[fill=red]
box {$\mathcal{E}$} q;
[fill=green]
box {QECC} (q);
[fill=pink]
cnot q[5] | q[3];
[fill=pink]
cnot q[4] | q[3];
[fill=pink]
cnot q[2] | q[0];
[fill=pink]
cnot q[1] | q[0];
[fill=pink]
cnot q[0] | q[3];
[fill=pink]
[fill=yellow]
h q[3];
\end{yquant}
\end{tikzpicture}
\caption{ Circuit to implement the $[[6,1]]$ DQ code. Here qubit $q_0$ stores the encoded information while the remaining (ancillary) qubits $q_1$-$q_5$ start in the state $\ket0$. The circuit shows the sequential application of the encoding operation, possible bit-flip errors on the physical qubits, the error correction step, and finally a decoding operation, which reverses the encoding to restore the quantum information at qubit $q_0$.}
\label{fig:circuit61DQ}
\end{figure}

\subsection{Equivalence classes of correctable Pauli errors}
For the QD scheme, we note that the errors protected at the inner layer are
\begin{align}
\{II,XX\}^{\otimes3}
\label{eq:61QDinner}
\end{align}
and the errors protected at the outer layer are
\begin{align}
\{I_D I_D I_D, X_D I_D I_D, I_D X_D I_D, I_D I_D X_D\},
\label{eq:61QDouter}
\end{align}
where the subscript $D$ refers to error operations of the DFS.
For the $[[2,1]]_D$ code, logical bit flip is defined as 
$X_D: \ket{0}_D \longleftrightarrow \ket{1}_D$. Letting $I_{D}$ represent any of correctable errors, and $X_{D}$, $Y_{D}$, and $Z_{D}$ represent the logical operators for the $[[2,1]]_{D}$ code, we have 
\begin{subequations}
\begin{eqnarray}
    I_D & \rightarrow II,XX , \label{eq:61QDmulti_a}\\
     X_D & \rightarrow XI, IX, \label{eq:61QDmulti_b} \\
     Y_D & \rightarrow YZ, ZY, \label{eq:61QDmulti_c} \\
     Z_D & \rightarrow ZZ, -YY.
    \label{eq:61QDmulti_d}
\end{eqnarray}
\label{eq:61QDmulti}
\end{subequations}
Here we note that all four operations have a multiplicity of 2.

Each block's error in the outer code, namely, $I_D$ or $X_D$, in Eq. (\ref{eq:61QDouter}), is represented by a multiplicity (actually, 2) of realizations in the inner code, given by Eq. (\ref{eq:61QDmulti}). Thus, each of the four correctable errors in Eq. (\ref{eq:61QDouter}) has $2^3 = 8$ degenerate realizations.
Thus there are 32 errors that the $[[6,1]]_{QD}$ code can correct, and these can be arranged in an equivalence class of four sets such that all eight elements within a given set are mutually degenerate,
given by
\begin{widetext}
\begin{align}
    E_{\rm {equiv}}^{[[6,1]]_{QD}} &\equiv \Bigl\{\{IIIIII, IIIIXX, IIXXII, IIXXXX, XXXXXX, XXIIII, XXIIXX, XXXXII\}, \{XIIIII, XIIIXX, XIXXII,\nonumber\\
&  XIXXXX, IXIIII, IXIIXX, IXXXII, IXXXXX \}, \{IIXIII, IIXIXX, IIIXII, IIIXXX, XXXIII, XXXIXX,\nonumber\\
& XXIXII, XXIXXX \},\{IIIIXI, IIIIIX, IIXXXI, IIXXIX, XXIIXI, XXIIIX, XXXXXI, XXXXIX \}\Bigr\}.
\label{eq:eec4}
\end{align}
The four mutually degenerate sets in this \textit{error degeneracy equivalence class} correspond to four sets having eight degenerate elements each. The QD and DQ codes inherently possess an element of passive error correction inherited from the DFS. Here, this is reflected in the fact that elements in the first set of the equivalence class are all passively correctable. Their provenance can be attributed to the fact that all these errors correspond to the identity operation ($I_DI_DI_D$) on the outer QECC layer. Structurally, we expect that these error operators are identical to eight stabilizers of the concatenated code and thus commute with the remaining stabilizers. We will find that these eight stabilizers (or their three generators) play a \textit{passive} role in that they do not require measurement, but formally arise as a result of the concatenation.

By contrast, with regard to the DQ scheme [Fig. \ref{fig:61DQ}], 
the errors protected by the outer DFS layer are
\begin{align}
\{I_Q I_Q, X_Q X_Q\},
\label{eq:61DQouter}
\end{align}
where the subscript $Q$ refers to the error operations of the QECC.
There are 16 correctable errors that yield the first term in Eq. (\ref{eq:61DQouter}) when the inner layer is decoded, namely,
\begin{align}
\{XII,IXI,IIX,III\}^{\otimes2}.
\label{eq:61DQinner}
\end{align}
Each of these has a counterpart corresponding to the second term in Eq. (\ref{eq:61DQouter}), obtained by applying the logical NOT operation on both blocks, i.e., $XXXXXX$. Thus, for example, $XIIIXI \xrightarrow{XXXXXX} IXXXIX$.

Accordingly, we have 16 sets of two errors each, yielding 32 correctable errors in all, which can be arranged in the following equivalence class:
\begin{align}
    E_{\rm equiv}^{[[6,1]]_{DQ}} &\equiv \Bigl\{\{XIIXII, IXXIXX\}, \{XIIIXI, IXXXIX \}, \{XIIIIX, IXXXXI\}, \{XIIIII, IXXXXX \},\nonumber\\
& \{IIXXII, XXIIXX\}, \{IIXIXI, XXIXIX \}, \{IIXIIX, XXIXXI\},\{IIXIII, XXIXXX \},\nonumber\\
&\{IXIXII, XIXIXX\}, \{IXIIXI, XIXXIX \}, \{IXIIIX, XIXXXI\}, \{IXIIII, XIXXXX \},\nonumber\\
& \{IIIXII, XXXIXX\}, \{IIIIXI, XXXXIX \}, \{IIIIIX, XXXXXI\}, \{IIIIII, XXXXXX\}\Bigr\}.
\label{eq:eec16}
\end{align}
\end{widetext}
The 16 mutually degenerate sets in the equivalence class correspond to 16 sets having two degenerate elements each. As in the QD case, passively correctable errors arise here too. In this example, they correspond to the last set in Eq. (\ref{eq:eec16}). Their provenance can be attributed to the fact that they lead, \textit{without} syndrome generation, to the DFS correctable errors $I_QI_Q$ and $X_QX_Q$ in the outer layer. Here again, the passively correctable errors can be shown to coincide with passive stabilizers.

The Hamming bound \cite{ekert1996quantum} on a nondegenerate QECC requires that $n-k \ge \log_2({\rm no.~of~correctable ~errors})$. Thus we can define the metric of \textit{Hamming efficiency} as a rough guide on how efficiently the correction works:
\begin{equation}
    \varphi = \frac{\log_2\left({\rm no.~of~correctable ~errors}\right)}{n-k}.
    \label{eq:density}
\end{equation}
From Eqs. (\ref{eq:eec4}) and (\ref{eq:eec16}), we have $\Vert E_{\rm equiv}^{[[6,1]]_{QD}} \Vert =4 \times 8 = 32$ and $\Vert E_{\rm equiv}^{[[6,1]]_{DQ}} \Vert =16 \times 2 = 32$ as the number of elements in the equivalence class for respective QD and DQ codes. Here we adopt the notational convention wherein the symbol $\vert E_{\rm equiv}^{CC} \vert$ denotes the cardinality of the equivalence class $E_{\rm equiv}^{CC}$ (i.e., the number of sets in the class), whereas $\Vert E_{\rm equiv}^{CC} \Vert$ denotes the total number of elements in the equivalence class. Thus, for both the above $[[6,1]]$ concatenated codes, $\varphi = \frac{\log_2(32)}{5} = 1$, which is appropriate for a perfect code. However, as the present codes are degenerate, we suggest that in this case it seems more appropriate to generalize Eq. (\ref{eq:density}) to the modified Hamming efficiency
\begin{equation}
    \varphi^{\prime}  = \frac{\log_2\left(|{\rm ~error~equivalence~class}|\right)}{n-k}.
    \label{eq:density+}
\end{equation}
This quantifies the number of distinguishable correctable errors as a fraction of non-coding bits. From Eqs. (\ref{eq:eec4}) and (\ref{eq:eec16}), $\vert E_{\rm equiv}^{[[6,1]]_{QD}} \vert = 4$ and $\vert E_{equiv}^{[[6,1]]_{DQ}} \vert = 16$, respectively. Thus $\varphi^{\prime} = \frac{2}{5}$ and $\varphi^{\prime} = \frac{4}{5}$ for the $[[6,1]]_{QD}$ and $[[6,1]]_{DQ}$ codes, respectively. 

The above examples illustrate the following result, which is generally true for QD and DQ concatenations.
\begin{theorem}
All passively correctable errors of a given QD or DQ code are mutually degenerate, i.e., they constitute a single set in the error degeneracy equivalence class.
\label{thm:setedec}
\end{theorem}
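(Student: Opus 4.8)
The plan is to reduce the statement to the group-closure property of the stabilizer. The central observation I would establish first is that, for either ordering, the passively correctable errors of the concatenated code coincide exactly with the \emph{passive} part of its stabilizer group. This is the feature inherited from the DFS: by the discussion preceding Theorem~\ref{thm:splittwoparts}, the defining property of a DFS is that its correctable errors are themselves stabilizer elements ($\forall_a E_a \in \mathcal{S}$), and by Theorem~\ref{thm:splittwoparts} this passivity descends to the concatenation---in the QD case as the $n \times d$ block stabilizers $\langle D_1,\ldots,D_d\rangle$ applied to each inner block, and in the DQ case as the $d$ DFS stabilizers encoded through the QECC logical operators. I would denote the subgroup these generate by $\mathcal{S}_P \subseteq \mathcal{S}$ and argue that an error $E$ is passively correctable if and only if $E \in \mathcal{S}_P$. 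In the $[[6,1]]$ examples this is precisely the identification of the first set of Eq.~(\ref{eq:eec4}) with $\langle XXIIII, IIXXII, IIIIXX\rangle$ and of the last set of Eq.~(\ref{eq:eec16}) with $\langle XXXXXX\rangle$.

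Given this identification, the remainder is short. Since $\mathcal{S}_P$ is a subgroup of the abelian stabilizer $\mathcal{S}$, it is closed under multiplication. For any two passively correctable errors $E_a, E_b \in \mathcal{S}_P$, closure gives $E_a E_b \in \mathcal{S}_P \subseteq \mathcal{S}$. Recalling the degeneracy criterion stated above---that $E_a$ and $E_b$ are degenerate whenever $E_a E_b \in \mathcal{S}$, which for Hermitian Pauli errors is the same as $E_a^{\dagger} E_b \in \mathcal{S}$---I would conclude that every pair of passively correctable errors is degenerate. Hence they are mutually degenerate and lie in a single set of the error degeneracy equivalence class.

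To show this set contains \emph{only} passively correctable errors, so that they form exactly one set rather than part of a larger one, I would argue by contradiction. Suppose an actively correctable error $E$ were degenerate with some $E_a \in \mathcal{S}_P$. Then $E E_a \in \mathcal{S}$, and since $E_a \in \mathcal{S}$ with $E_a^2 = I$, associativity and group closure force $E = (E E_a) E_a \in \mathcal{S}$, making $E$ itself a stabilizer element and therefore passively correctable---contradicting its assumed active character. This confirms that the passively correctable errors constitute a single, self-contained set.

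The main obstacle I anticipate is the first step: making the identification ``passively correctable $\Leftrightarrow$ element of $\mathcal{S}_P$'' precise and ordering-independent. One must verify, using Theorem~\ref{thm:splittwoparts}, that in both the QD and DQ layouts---and for both $k_i \mid n_o$ and $k_i \nmid n_o$---the generators listed there do generate an abelian subgroup whose elements are exactly the errors corrected without syndrome extraction, with no spurious active stabilizer or non-stabilizer Pauli entering the set. Once that is secured, the subsequent group-theoretic argument is automatic.
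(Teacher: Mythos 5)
Your proof is correct, but it takes a genuinely different route from the paper's. The paper argues operationally, straight from the decoding structure: in the QD case every passively correctable error decodes to the identity error on the outer QECC layer, and in the DQ case it decodes syndromelessly to a DFS-correctable (hence trivially acting) error on the outer layer; since all such errors therefore act identically on the code space, they satisfy the membership requirement for one set of the equivalence class. You instead route through the algebraic characterization: passively correctable errors form the subgroup $\mathcal{S}_P \subseteq \mathcal{S}$, group closure gives $E_a E_b \in \mathcal{S}$ and hence pairwise degeneracy, and your contradiction argument ($E = (E E_a)E_a \in \mathcal{S}$) shows no actively correctable error can join the set. This last step is actual added value: the paper's proof establishes that the passive errors are mutually degenerate but never explicitly verifies the maximality implicit in ``constitute a single set,'' which you do. The paper's approach buys brevity and ordering-uniformity without needing the stabilizer group pinned down; yours buys a checkable pairwise criterion and the self-containedness claim. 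One caveat on your anticipated obstacle: the identification ``passively correctable $\Leftrightarrow E \in \mathcal{S}_P$'' is exact only relative to the error sets the paper enumerates [Eqs.~(\ref{eq:eec4}) and (\ref{eq:eec16}) and their $[[10,1]]$ analogues]. In full generality, mixed products of passive and active generators---e.g., $S_1 S_5 = -Y^{1a}Y^{1b}X^{1c}X^{2a}X^{2b}X^{2c}$ in the $[[6,1]]_{DQ}$ code, from Eqs.~(\ref{eq:generators61DQ-1}) and (\ref{eq:generators61DQ-2})---are also syndromeless and act trivially, so the set of \emph{all} passively correctable Pauli errors is the full stabilizer $\mathcal{S}$ rather than your $\mathcal{S}_P$. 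This does not break your argument (both directions of your reasoning only use membership in $\mathcal{S}$, and stabilizer membership implies passive correctability), but the cleaner statement is to run the closure and contradiction steps directly with $\mathcal{S}$, which makes the $k_i \mid n_o$ versus $k_i \nmid n_o$ case distinction you worried about irrelevant.
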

\begin{proof}
In the QD case, the passively correctable errors of the concatenated code decode to the identity error in the outer QECC layer. In the DQ case, these passively correctable errors decode syndromelessly to the correctable errors in the outer DFS layer. In the QD and DQ cases, therefore the passive correctability criterion fulfills the requirement to constitute an element of the error degeneracy equivalence class.
\end{proof}
As an illustration of Theorem \ref{thm:setedec}, in a given equivalence class of correctable errors for a concatenation involving DFS as one layer of protection, all elements in a set or none will possess a DFS-like structure.

\begin{figure}[htp]
    \centering
    \includegraphics[width=8cm]{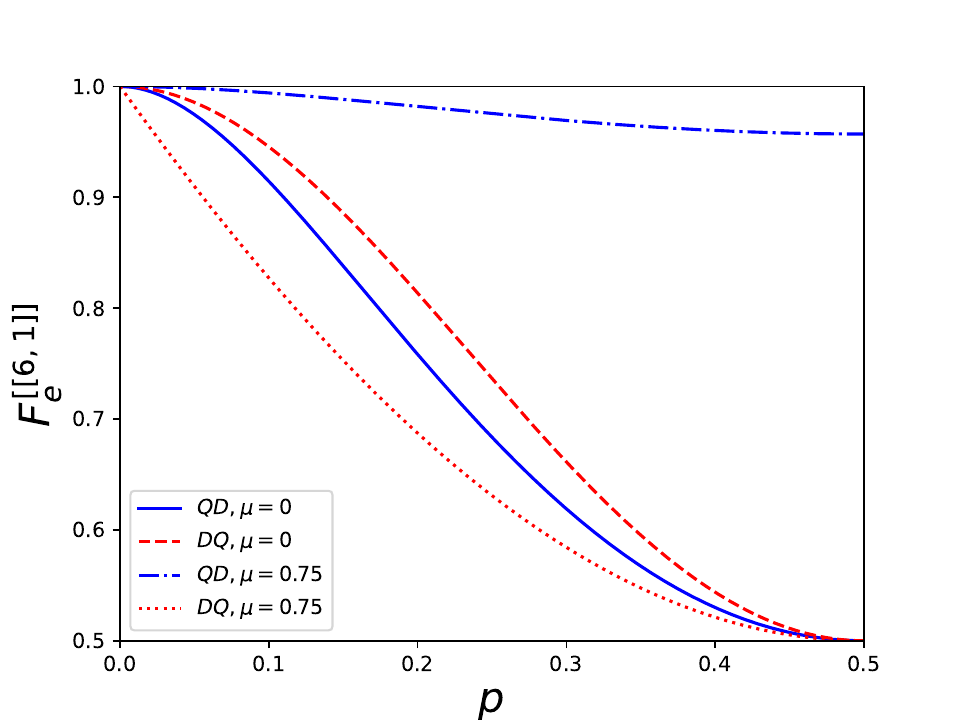}
    \caption{(Color online) Entanglement fidelities of the codes $[[6,1]]_{QD}$ and $[[6,1]]_{DQ}$ for the independent-error model ($\mu=0$) and the restricted correlated error model ($\mu=0.75$, without cross-block correlation), respectively, depicted as the blue solid, red dashed, blue dash-dotted, and red dotted plots. The plots illustrate the idea that QD codes outperform DQ ones for sufficiently strong correlation of the noise. }
     \label{fig:ef61}
\end{figure}

\subsection{Performance of the [[6,1]] QD and DQ codes under correlated noise}
We now discuss the performance of the $[[6,1]]$ QD and DQ codes in terms of the failure probability for the hybrid independent-correlated noise model.
In a correlated error model, the stand-alone failure probability for $[[3,1]]_{Q}$ and $[[2,1]]_{D}$ codes are given by
\begin{subequations}
\begin{align}
    p_F^{[[3,1]]_{Q}}(\mu,p) &=p_{(0|0)}p_{(0|0)}p_0 + p_{(1|0)}p_{(0|0)}p_0 \nonumber\\
    &+ p_{(0|1)}p_{(1|0)}p_0 + p_{(0|0)}p_{(0|1)}p_1  \nonumber \\
    &= (3p^2-2p^3)(\mu -1)^2-p(\mu - 2)p,
    \label{eq:pf3Qcorrelated:31}\\
    p_F^{[[2,1]]_{D}}(\mu,p) &=p_{(1|0)}p_0 + p_{(0|1)}p_1  \nonumber \\
    &= 2 (1 - p) p (1 - \mu).
    \label{eq:pf3Qcorrelated:21}
\end{align}
 \label{eq:pf3Qcorrelated}
\end{subequations}
By Eqs. (\ref{eq:ccerroroutin}) and (\ref{eq:pf3Qcorrelated}), the failure probability of the $[[6,1]]_{QD}$ code is
\begin{align}
    p_F^{[[6,1]]_{QD}} &= p_F^{[[3,1]]_{Q}}\left[0,p_F^{[[2,1]]_{D}}(\mu,p)\right]\nonumber\\
   &= 3\left[p_F^{[[2,1]]_{D}}(\mu,p)\right]^2\left[1-p_F^{[[2,1]]_{D}}(\mu,p)\right]\nonumber\\
   &+\left[p_F^{[[2,1]]_{D}}(\mu,p)\right]^3,
    \label{eq:61pfQDcorrelated}
    \end{align}
whereas the failure probability of the $[[6,1]]_{DQ}$ code is
\begin{align}
    p_F^{[[6,1]]_{DQ}} &= p_F^{[[2,1]]_{D}}\left[0,p_F^{[[3,1]]_{Q}}(\mu,p)\right]\nonumber\\
    &=2p_F^{[[3,1]]_{Q}}(\mu,p)\left[1-p_F^{[[3,1]]_{Q}}(\mu,p)\right].
    \label{eq:61pfDQcorrelated}
\end{align}

The above two failure probabilities lead to roughly similar behavior, with the failure probability for the concatenated code attaining the maximum for $p=\frac{1}{2}$, as should be the case. The entanglement fidelities [according to Eq. (\ref{eq:EF})] for the two concatenated codes, namely, $F_e^{[[6,1]]_{QD}}$ and $F_e^{[[6,1]]_{DQ}}$, are plotted in Fig. \ref{fig:ef61} both for the regime of independent errors ($\mu=0$) and for errors with high correlation ($\mu=0.75$). In the former case, the DQ code outperforms the QD code, because the inner layer of the QD code fails to correct many errors that are independent.  
In the latter case, the QD code outperforms the DQ code, because the inner layer of the DQ code fails to correct many errors that are correlated.


\section{The [[10,1]] QD and DQ codes \label{sec-conccodes10}}

Here the $[[5,1]]$ Knill-Laflamme QECC \cite{knill2001benchmarking} is concatenated with the $[[2,1]]_{D}$ code. Such a concatenation may be useful in situations where independent and correlated errors coexist \cite{boulant2005experimental}. When the latter errors are stronger than the independent errors, the QD scheme is preferable, whereas the DQ scheme is preferable when independent errors are stronger than correlated errors.

The $[[10,1]]_{QD}$ code is depicted in Fig. \ref{fig:101QD}.
The logical code words for this code are
\begin{align}
\ket{0}&\longrightarrow  \frac{1}{4}(\ket{00000}_D+\ket{10010}_D+\ket{01001}_D+\ket{10100}_D\nonumber\\
&+\ket{01010}_D-\ket{11011}_D-\ket{00110}_D-\ket{11000}_D\nonumber\\
&-\ket{11101}_D-\ket{00011}_D-\ket{11110}_D-\ket{01111}_D\nonumber\\
&-\ket{10001}_D-\ket{01100}_D-\ket{10111}_D+\ket{00101}_D ),\nonumber\\
\ket{1}&\longrightarrow \frac{1}{4}(\ket{11111}_D +\ket{01101}_D+\ket{10110}_D+\ket{01011}_D\nonumber\\
&+\ket{10101}_D-\ket{00100}_D-\ket{11001}_D-\ket{00111}_D\nonumber\\
&-\ket{00010}_D-\ket{11100}_D-\ket{00001}_D-\ket{10000}_D\nonumber\\
&-\ket{01110}_D-\ket{10011}_D-\ket{01000}_D-\ket{11010}_D ),
\label{eq:QD[10,1]}
\end{align}
where the subscript $D$ indicates that there is an inner layer of encoding where each qubit is replaced by the corresponding encoded qubit of the DFS. Thus, for example, 
\begin{align}
    \ket{00000}_D&=\ket{0}_D \ket{0}_D \ket{0}_D \ket{0}_D \ket{0}_D\nonumber\\
    &=(\ket{00}+\ket{11})^{\otimes5},
    \nonumber
\end{align}
and so forth.

\begin{figure}[htp]
    \centering
     \begin{subfigure}[b]{0.225\textwidth}
         \centering
         \includegraphics[width=\textwidth]{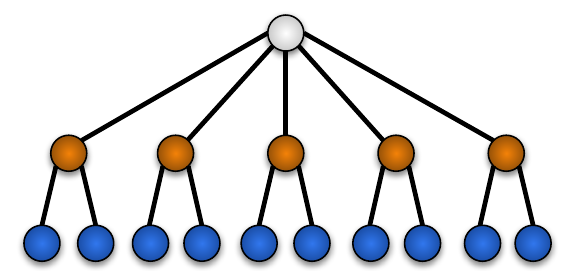}
         \caption{}
         \label{fig:101QD}
     \end{subfigure}
     \hfill
    \begin{subfigure}[b]{0.225\textwidth}
         \centering
         \includegraphics[width=\textwidth]{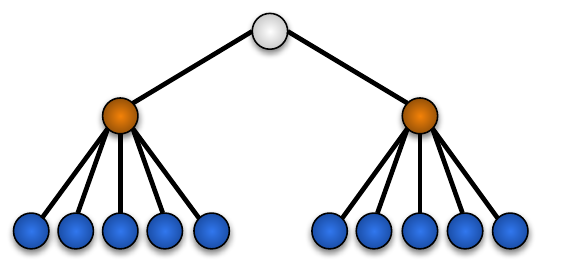}
         \caption{}
         \label{fig:101DQ}
     \end{subfigure}
    \caption{The [[10,1]] block structure: (a) QD configuration and (b) DQ configuration. The dark orange circles represent the outer code blocks, while the blue circles represent the inner code qubits.}
    \label{fig-block-10-qdq}
    \end{figure}

The DQ scheme of the $[[10,1]]$ code is depicted in Fig. \ref{fig:101DQ}, for which the logical code words for this ten-qubit DQ code are
\begin{align}
\ket{0}&\longrightarrow |00\rangle_Q + |11\rangle_{Q},\nonumber\\
\ket{1}&\longrightarrow \ket{01}_Q + \ket{10}_{Q},
\label{{eq:DQ[10,1]}}
\end{align}
where $\ket{0}_Q$ and $\ket{1}_Q$ are the logical code words for the five-qubit QECC.

\begin{figure}[htp]
    \centering
    \includegraphics[width=8cm]{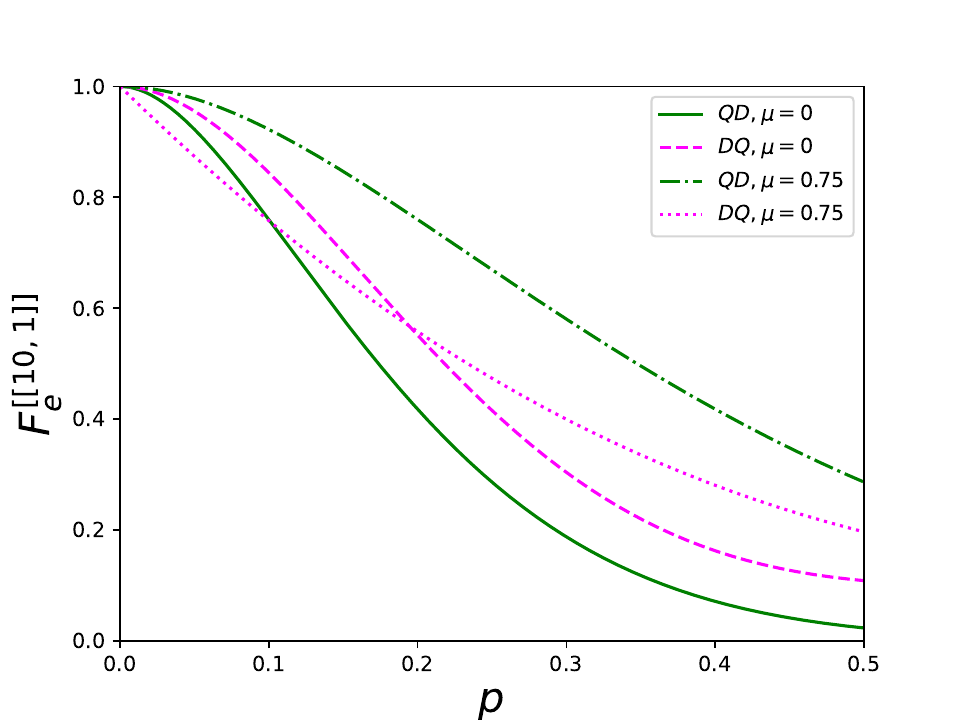}
    \caption{(Color online)  Entanglement fidelities of the concatenated codes $[[10,1]]_{QD}$ and $[[10,1]]_{DQ}$ under the independent-error model ($\mu=0$) and the restricted-correlated error model ($\mu=0.75$), respectively, depicted as the green solid, magenta dashed, green dash-dotted and magenta dotted plots.  As in the six-qubit case, for sufficiently correlated error, QD outperforms DQ and the reverse when the independent-error model is used.}
     \label{fig:ef101}
\end{figure}

\subsection{Equivalence classes of correctable Pauli errors}
The correctable errors here can be arranged in an equivalence class, given by the collection of 16 sets
\begin{align}
    E_{\rm equiv}^{[[10,1]]_{QD}} \equiv &~~ \{[IIIII]_D,[XIIII]_D,[IXIII]_D,[IIXII]_D,\nonumber \\ &~~[IIIXI]_D,[IIIIX]_D,[YIIII]_D,[IYIII]_D, \nonumber \\
    &~~ [IIYII]_D,[IIIYI]_D,[IIIIY]_D, [ZIIII]_D,\nonumber\\
    &~~  [IZIII]_D, [IIZII]_D,[IIIZI]_D, [IIIIZ]_D
    \},
    \label{eq:101QDequivA}
\end{align}
which correspond the 16 errors correctable by the outer code. Here $[\cdots]_D$ denotes the expansion of the five-qubit operator string by insertion of corresponding logical operation in the inner code, e.g., $[IIIII]_D \equiv \{II, XX\}^{\otimes 5}$, yielding 32 operators with a length ten-qubit operators. Thus Eq. (\ref{eq:101QDequivA}) represents an equivalence class of $\vert E_{\rm equiv}^{[[10,1]]_{QD}}\vert = 16$ sets having 32 degenerate elements each, which yields $\Vert E_{\rm equiv}^{[[10,1]]_{QD}}\Vert = 16 \times 32 = 512$ elements in all. Here all elements in the set $[IIIII]_D$ are passively correctable, in the manner of those in the set $[III]_D$ of $E_{\rm equiv}^{[[6,1]]_{QD}}$ [Eq. (\ref{eq:eec4})]. In Sec. \ref{sec:stab10} we will show that these errors are generated by five passive stabilizer generators. Therefore, for the $[[10,1]]_{QD}$ concatenated code, the Hamming efficiency $\varphi = \frac{\log_2(512)}{9} =1$ and the modified Hamming efficiency $\varphi^{\prime} = \frac{\log_2(16)}{9} = \frac{4}{9}$.

For the $[[10,1]]_{DQ}$ code, in place of Eq. (\ref{eq:101QDequivA}), we have $E_{\rm equiv}^{[[10,1]]_{DQ}} \equiv \{[II]_Q,[XX]_Q$\}. There are 16 Pauli errors that are correctable and thus yield $I_Q$ after the inner layer has been decoded. Correspondingly, there are 16 Pauli errors that result in $X_Q$ after the inner layer has been decoded. These are just the correctable errors to which the logical NOT operator $XXXXX$ has been applied, e.g., $IIYII \rightarrow XXZXX$. Thus there are $16^2$ sets consisting of a pair of Pauli errors in the set $E_{\rm equiv}^{[[10,1]]_{DQ}}$. We thus have $\vert E_{\rm equiv}^{[[10,1]]_{DQ}} \vert =16^2 = 256$ as the number of sets in the class, and $\Vert E_{\rm equiv}^{[[10,1]]_{DQ}} \Vert =16^2 \times 2 = 512$ as the total number of elements in the equivalence class. Accordingly, the Hamming efficiency is $\varphi = \frac{\log_2(512)}{9} = 1$ and the modified Hamming efficiency is $\varphi^{\prime} = \frac{\log_2(256)}{9} = \frac{8}{9}$. In the manner of the $[[6,1]]_{DQ}$ case [Eq. (\ref{eq:eec16})], here too there are only two passively correctable errors, ($I^{\otimes 10}$ and $X^{\otimes 10}$), generated by a passive stabilizer generator (Sec. \ref{sec:stab10}). 

\subsection{Performance of the [[10,1]] QD and DQ codes under correlated noise}
In a correlated error model, the stand-alone failure probability for the $[[5,1]]_{Q}$ and $[[2,1]]_{D}$ code is calculated as
\begin{widetext}
\begin{subequations}
\begin{align}
    p_F^{[[5,1]]_{Q}}(\mu,p) & = 1-p_C^{[[5,1]]_{Q}}(\mu,p)\nonumber\\
    &=1 - 3 (1 - p)^2 p (1 - \mu)^2 \left[(1 - p) (1 - \mu) + \mu \right]^2- 2 (1 - p) p (1 - \mu) \left[(1 - p) (1 - \mu) + \mu \right]^3\nonumber\\
    & - (1 - p) \left[(1 - p) (1 - \mu) + \mu \right]^4,  \label{eq:pf5Qcorrelated:51}\\
    p_F^{[[2,1]]_{D}}(\mu,p) &= 1-p_C^{[[2,1]]_{D}}(\mu,p)\nonumber\\
    &= 1 - (1 - p) \left[(1 - p) (1 - \mu) + \mu \right ]-\frac{p}{3}(\frac{p}{3} (1 - \mu) + \mu),
\label{eq:pf5Qcorrelated:21}
\end{align}
\label{eq:pf5Qcorrelated}
\end{subequations}
\end{widetext}
where $p_C^{[[5,1]]_{Q}}$ and $p_C^{[[2,1]]_{D}}$ are the probabilities of the correctable errors for the $[[5,1]]_{Q}$ and $[[2,1]]_{D}$ codes, respectively. Here $p_0=1-p$ and $p_1=p_2=p_3=\frac{p}{3}$. The DFS failure probability (\ref{eq:pf5Qcorrelated:21}), which includes $Y$ and $Z$ errors, may be contrasted with Eq. (\ref{eq:pf3Qcorrelated:21}), which includes only $X$ errors.
By Eqs. (\ref{eq:ccerroroutin}) and (\ref{eq:pf5Qcorrelated}), the failure probability of the $[[10,1]]_{QD}$ code is
\begin{align}
    p_F^{[[10,1]]_{QD}}(\mu,p) &= p_F^{[[5,1]]_{Q}}\left[0,p_F^{[[2,1]]_{D}}(\mu,p)\right]\nonumber\\
   &=1 - \left[1 - p_F^{[[2,1]]_{D}}(\mu,p)\right]^5 \nonumber\\
   &- 5 \left[1 - p_F^{[[2,1]]_{D}}(\mu,p)\right]^4 p_F^{[[2,1]]_{D}}(\mu,p),
    \label{eq:101pfQDcorrelated}
\end{align}
whereas the failure probability of the $[[10,1]]_{DQ}$ code is
\begin{align}
    p_F^{[[10,1]]_{DQ}}(\mu,p) &= p_F^{[[2,1]]_{D}}\left[0,p_F^{[[5,1]]_{Q}}(\mu,p)\right]\nonumber\\
    &=\frac{2}{3}\left[p_F^{[[5,1]]_{Q}}(\mu,p)\right]\left(1-p_F^{[[5,1]]_{Q}}(\mu,p)\right).
    \label{eq:101pfDQcorrelated}
    \end{align}
Employing Eq. (\ref{eq:EF}), we obtain the entanglement fidelities $F_e^{[[10,1]]_{QD}}$ and $F_e^{[[10,1]]_{DQ}}$ for the ten-qubit QD and DQ, respectively. These fidelities as a function of single-qubit error probability $p$ are plotted in Fig. \ref{fig:ef101}. We find that for sufficiently low correlation of the noise, the DQ code outperforms the QD code and vice versa when the correlation is sufficiently large, as discussed earlier in the context of the $[[6,1]]$ code.

\section{Generator structure of the concatenated code}\label{sec-stabilizergens}
The stabilizer of the concatenated code is a concatenation of the stabilizers of the constituent codes. A well known example is the Shor code, whose stabilizer generators can be constructed by concatenating the three-qubit phase-flip and bit-flip codes \cite{gaitan2008quantum}. A DFS code is a kind of stabilizer code \cite{lidar1999concatenating, lidar2001decoherenceII} in that the code space is stabilized by commuting operators and the errors fulfill the error-correcting condition (\ref{errorcorrectingcriterion2}). However, unlike with a conventional stabilizer code  \cite{gottesman1997stabilizer}, the stabilizer generators commute with all error operators, being identical to the error operators. This is reflective of the fact that DFS is a passive rather than active error-correction scheme. Therefore, in a broader sense, we expect that the QD and DQ concatenations also possess a stabilizer structure. 

On the other hand, the DFS introduces novel elements into the stabilizer structure, resulting in the degeneracy of the correctable errors and also, interestingly, degeneracy of the stabilizers. Importantly, the stabilizer structure divides into an active and a passive part, with corresponding errors and stabilizer operators. In the following, we illustrate the stabilizer structure of QD and DQ codes by means of the $[[6,1]]$ and $[[10,1]]$ codes considered above. 

\subsection{The [[6,1]] codes}
Here the $[[3,1]]_Q$ and $[[2,1]]_D$ codes are concatenated, with stabilizer groups given by
\begin{align}
    \mathcal{S}^{[[3,1]]_{Q}}= \langle Z^1Z^2, Z^1Z^3 \rangle
    \label{eq:generators31}
\end{align}
and
\begin{align}
    \mathcal{S}^{[[2,1]]_{D}}= \langle X^1X^2 \rangle,
    \label{eq:generators21}
\end{align}
respectively. In both the QD and DQ cases, there will be $n-k=5$ stabilizer generators.

In the QD case, we label the qubits blockwise as $(1a,1b)$,$(2a,2b)$, and $(3a,3b)$. To each block we assign a copy of stabilizer generator of the DFS (\ref{eq:generators21}), yielding three of the generators
\begin{align}
    &S_{1}=X^{1a}X^{1b},\nonumber\\
    &S_{2}=X^{2a}X^{2b},\nonumber\\
    &S_{3}=X^{3a}X^{3b}.
    \label{eq:generators61QD-1}
\end{align}
Note that the elements of the first set in Eq. (\ref{eq:eec4}) for $E_{\rm equiv}^{[[6,1]]_{QD}}$ are generated by the operators in Eq. (\ref{eq:generators61QD-1}), in keeping with the fact that these errors are passively correctable. 

It is worth mentioning that the stabilizer generators $S_1$, $S_2$, and $S_3$ in Eq. (\ref{eq:generators61QD-1}) automatically stabilize the code space and do not require measurement. To elaborate the latter point, we note that, on the one hand, the passively correctable errors are generated by these generators. For example, the first set in Eq. (\ref{eq:eec4}) is generated by $S_1$, $S_2$, and $S_3$ in Eq. (\ref{eq:generators61QD-1}). On the other hand, as for the actively correctable errors, namely, the last three sets in Eq. (\ref{eq:eec4}), they commute with each of the stabilizer operators $S_1$, $S_2$, and $S_3$. For this reason, these operators may be referred to as passive. This characteristic feature of the passive stabilizer operators, namely, of not requiring measurement, will also be found to hold in the other examples discussed below.

Interestingly, these stabilizer operators are themselves degenerate, which is a consequence of the fact that all of them are passive. Their passivity by design arises from the fact that they are stabilizers of correctable errors of the inner code, and the remaining errors correspond to logical operations at the inner layer that decode to errors corrected at the outer layer. 
By virtue of the fact that the passively correctable errors are stabilizer operators, it follows that they commute with other stabilizer generators $S_4$ and $S_5$, given below.

The remaining two generators can be obtained by encoding the qubits that make up the outer code's stabilizer in terms of logical operations of the inner code. Accordingly, from Eqs. (\ref{eq:generators31}) and (\ref{eq:61QDmulti}) we find
\begin{align}
    &S_{4}=[Z^{1}][Z^{2}],\nonumber\\
    &S_{5}=[Z^{1}][Z^{3}],
    \label{eq:generators61QD-2}
\end{align}
where the terms in the square brackets can each be encoded in two ways, according to Eq. (\ref{eq:61QDmulti_d}). 
Thus, each of the stabilizer elements in Eq. (\ref{eq:generators61QD-2}) represents an equivalence class of four generators that are degenerate in the sense that they produce the same syndrome for each correctable error.
This degeneracy of the stabilizers should be distinguished from that of the stabilizers in Eq. (\ref{eq:generators61QD-1}), which has its origin in the passivity of their error correction. 
Thus, the set of five stabilizer generators for the $[[6,1]]_{QD}$ code is given by Eq. (\ref{eq:generators61QD-1}) and selecting any one element from each of the equivalence classes $S_4$ and $S_5$ of Eq. (\ref{eq:generators61QD-2}).

In the DQ case, we label the qubits blockwise as $(1a,1b,1c)$, and $(2a,2b,2c)$. To each block we assign a copy of stabilizer generators of the QECC (\ref{eq:generators31}), yielding four of the five generators
\begin{align}
    &S_{1}=Z^{1a}Z^{1b},\nonumber\\
    &S_{2}=Z^{1a}Z^{1c},\nonumber\\
    &S_{3}=Z^{2a}Z^{2b},\nonumber\\
    &S_{4}=Z^{2a}Z^{2c}.
    \label{eq:generators61DQ-1}
\end{align}
The remaining single generator can be obtained by encoding the qubits that make up the outer code's stabilizer in terms of logical operations of the inner code. Accordingly, from Eq. (\ref{eq:generators31}) and the fact that $X_L = XXX$ for the $[[3,1]]_Q$ QECC, we find
\begin{align}
    &S_{5}=(X^{1a}X^{1b}X^{1c})(X^{2a}X^{2b}X^{2c}).
    \label{eq:generators61DQ-2}
\end{align}
Note that this generator is passive and generates the two passively correctable errors $I^{\otimes6}$ and $X^{\otimes 6}$. Summarizing, the set of five stabilizer generators for the $[[6,1]]_{QD}$ code are given by Eqs. (\ref{eq:generators61DQ-1}) and (\ref{eq:generators61DQ-2}). 

\subsection{The [[10,1]] codes}\label{sec:stab10}

The $[[5,1]]_{Q}$ and $[[2,1]]_{D}$ codes are concatenated here, and the stabilizer groups are provided by
\begin{align}
    \mathcal{S}^{[[5,1]]_{Q}}= \langle X^1Z^2Z^3X^4, X^2Z^3Z^4X^5, X^1X^3Z^4Z^5, Z^1X^2X^4Z^5 \rangle
    \label{eq:generators51}
\end{align}
and Eq. (\ref{eq:generators21}), respectively. Here there will be $n-k=9$ stabilizer generators for both QD and DQ codes.
In the QD case, the qubits are designated via blockwise labeling as $(1a,1b)$, $(2a,2b)$, $(3a,3b)$, $(4a,4b)$, and $(5a,5b)$. The assignment of a copy of the stabilizer generators of the DFS (\ref{eq:generators21}) to each block will give five stabilizer generators
\begin{align}
    &S_{1}=X^{1a}X^{1b},\nonumber\\
    &S_{2}=X^{2a}X^{2b},\nonumber\\
    &S_{3}=X^{3a}X^{3b},\nonumber\\
    &S_{4}=X^{4a}X^{4b},\nonumber\\
    &S_{5}=X^{5a}X^{5b}.
    \label{eq:generators101QD-1}
\end{align}
Note that the 32 elements of the error set $[IIIII]_{D}$ in Eq. (\ref{eq:101QDequivA}) are generated by the operators $S_j$ in Eq. (\ref{eq:generators101QD-1}), consistent with the fact that those errors are passively correctable. As in the case of the $[[6,1]]_{QD}$ code, these operators are themselves passive and thus degenerate.

The remaining four stabilizer generators are the encoded version of stabilizer generators of the five-qubit code Eq. (\ref{eq:generators51}) in block labels. These four generators, though actively error-correcting, also involve stabilizer degeneracy, this arising because of the multiplicity (\ref{eq:61QDmulti}): 
\begin{align}
    &S_{6}=[X^1][Z^2][Z^3][X^{4}],\nonumber\\
    &S_{7}=[X^{2}][Z^3][Z^4][X^{5}],\nonumber\\
    &S_{8}=[X^1][X^3][Z^4][Z^5],\nonumber\\
    &S_{9}=[Z^1][X^2][X^4][Z^5].
      \label{eq:generators101QD-2}
\end{align}
Each of four stabilizer elements in Eq. (\ref{eq:generators101QD-2}) represents an equivalence class of 16 degenerate elements.Thus the set of nine stabilizer generators for the $[[10,1]]_{QD}$ code is given by Eq. (\ref{eq:generators101QD-1}) and selecting any one element from each of the equivalence classes $S_6$ through $S_9$ in Eq. (\ref{eq:generators101QD-2}). 

In the DQ case, the resultant concatenated code has two blocks $(1a,1b,1c,1d,1e)$ and $(2a,2b,2c,2d,2e)$. We assign a copy of stabilizer generators of the inner QECC (\ref{eq:generators51}) to each block, yielding the eight of nine stabilizer generators being expressed as
\begin{align}
    &S_{1}=X^{1a}Z^{1b}Z^{1c}X^{1d},\nonumber\\
    &S_{2}=X^{1b}Z^{1c}Z^{1d}X^{1e},\nonumber\\
    &S_{3}=X^{1a}X^{1c}Z^{1d}Z^{1e},\nonumber\\
    &S_{4}=Z^{1a}X^{1b}X^{1d}Z^{1e},\nonumber\\
    &S_{5}=X^{2a}Z^{2b}Z^{2c}X^{2d},\nonumber\\
    &S_{6}=X^{2b}Z^{2c}Z^{2d}X^{2e},\nonumber\\
    &S_{7}=X^{2a}X^{2c}Z^{2d}Z^{2e},\nonumber\\
    &S_{8}=Z^{2a}X^{2b}X^{2d}Z^{2e}.
    \label{eq:generators101DQ-1}
\end{align}
The block label encoding of the generators of outer DFS (\ref{eq:generators21}) will give the remaining one generator 
\begin{align}
    &S_{9}=(X^{1a}X^{1b}X^{1c}X^{1d}X^{1e})(X^{2a}X^{2b}X^{2c}X^{2d}X^{2e}).
    \label{eq:generators101DQ-2}
\end{align}
Thus the set of stabilizer generators for the $[[10,1]]_{DQ}$ code is given by Eqs. (\ref{eq:generators101DQ-1}) and (\ref{eq:generators101DQ-2}).
We note that the generator in Eq. (\ref{eq:generators101DQ-2}) is passive and generates the two passively correctable errors $I^{\otimes10}$ and $X^{\otimes 10}$, in a manner similar to the case of the $[[6,1]]_{DQ}$ code [Eq. (\ref{eq:generators61DQ-2})].

\section{Conclusions and discussions}\label{sec-conclusions}

This work studied the codes obtained by concatenating quantum error-correcting codes and decoherence-free subspaces. This concatenation is suitable when both independent and correlated errors occur simultaneously. When the errors are sufficiently strongly correlated, the concatenation with the DFS as the inner code provides better entanglement fidelity, whereas for errors that are sufficiently independent, the concatenation with QECC as the inner code is preferable. The concatenation of QECC and DFS results in a kind of stabilizer structure that splits naturally into passive and active parts. Whereas the passive part functions like a DFS, the active one functions like a QECC.

As examples, we studied specifically the concatenation of a two-qubit DFS code with a three-qubit repetition code and five-qubit Knill-Laflamme code, under independent and correlated error models. In these examples, the considered QECCs are nondegenerate. A degenerate code such as Shor’s nine-qubit code can also be concatenated with a DFS. In this case, degeneracy in the actively correctable errors will have contributions not only from the DFS, but also from the QECC.

Whereas increasing the number of concatenated layers improves resistance to noise (with the asymptotic performance for the fault-tolerant systems obtained when the number of layers approaches infinity \cite{rahn2002exact}), the practical implementation also gets harder. Moreover, the pseudothreshold [defined as the maximum physical error rate $p$ below which $p_F(p) < p$] is insensitive to concatenation. In Fig. \ref{fig: fp6DQthresh}, the probability of failure ($p_F$) is plotted as a function of the probability of the individual qubit error ($p$) with concatenation depth ${L=1,2,3,4}$ for the $[[6,1]]_{DQ}$ code.

\begin{figure}[htp]
    \centering
    \includegraphics[width=8cm]{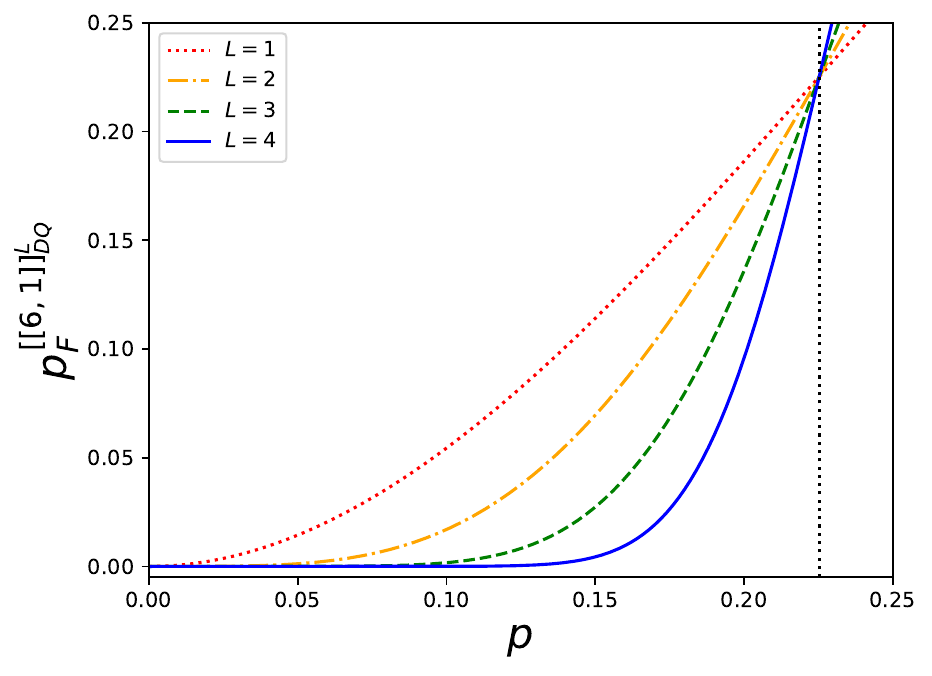}
    \caption{(Color online) Failure probability of the concatenated code $[[6,1]]_{DQ}$ with pseudothreshold probability $p_{thres}\approx 0.225$ for $L=1$, $2$, $3$, and $4$  concatenations, depicted as the red dotted, orange dash-dotted, green dashed, and blue solid plots, respectively.}
     \label{fig: fp6DQthresh}
\end{figure}

The pseudothreshold remains invariant, because when the innermost code fails, it precipitates the failure of all outer layers. The pattern is the same for the $[[6,1]]_{QD}$ and $[[10,1]]_{QD / DQ}$ codes, except that the pseudothreshold is slightly different. For independent errors, typically, the inner code with the higher pseudothreshold provides better logical noise suppression \cite{chamberland2016thresholds}. The performance parameters for the four codes discussed in this work, along with their pseudothresholds, are summarized in Table \ref{table:1}. In conclusion, the preferred order of concatenation will depend on both the level of correlation in the noise and the desired pseudothreshold. Here the concept of the pseudothreshold is the same as that of the threshold mentioned in Ref. \cite{rahn2002exact}. Note that the pseudothreshold should be distinguished from the asymptotic threshold relevant to realistic simulations of fault-tolerant computing \cite{svore2006flowmap}. Here different components of the circuit are allowed to fail at differing rates, so that the thresholds at different levels of concatenation no longer match, and an asymptotic analysis would be needed to determine the tolerable error rate $p$. 

\begin{table}[h!]
\caption{Error type, Hamming efficiency, modified Hamming efficiency, and pseudothreshold probability of the concatenated codes.}
\centering
\begin{tabular}{c c c c c} 
 \hline\hline
\backslashbox{Parameter}{Code} & $[[6,1]]_{QD}$ & $[[6,1]]_{DQ}$ & $[[10,1]]_{QD}$ & $[[10,1]]_{DQ}$ \\ [0.5ex] 
 \hline
 $E_{type}$ & $X,XX$ & $X,XX$ & $X,Y,Z,XX$ &  $X,Y,Z,XX$\\ 
 $\varphi$ & $1$ & $1$ & {$1$} & $1$ \\
 $\varphi^{\prime}$ & $0.4$ & $0.8$ & $0.444$ & $0.888$ \\
 $p_{thres}$ & $0.1293$ & $0.2252$ & $0.0298$ & $0.0579$ \\ [1ex] 
 \hline\hline
\end{tabular}
\label{table:1}
\end{table}
Given the presence of correlated noise, here the main task involves identifying the DFS and the selection of the compatible QECC in a way that facilitates implementing fault-tolerant quantum computation with the concatenated code in a given quantum processor. For example, in a two-dimensional decoherence-free photonic memory subspace protected from collective errors, if the independent errors also occur with nonvanishing probability, then a QECC can be encoded in it to improve protection. Our work motivates the study of other types of compatible QECC and DFS code concatenation. 

The QD and DQ codes can be considered as generalizing stabilizer codes to a new class of hybrid codes, which correct errors both actively and passively. The QECCs and DFSs form special cases of such a hybrid code.

\acknowledgements
N.R.D. and S.D. acknowledge financial support from the Department of Science and Technology, Ministry of Science and Technology, India, through the INSPIRE fellowship  
and the University Grants Commission India through the NET fellowship, respectively. N.R.D. also thanks Vinod Rao for insightful discussions during the early stages of this project. R.S. acknowledges partial support from the Science and Engineering Research Board through Grant No. CRG/2022/008345. N.R.D. is sincerely grateful to the Poornaprajna Institute of Scientific Research for its hospitality and conducive environment during his academic visit.

\bibliographystyle{apsrev4-1}
\bibliography{reference}

\end{document}